\newtheorem{theorem}{Theorem}
\newtheorem{corollary}[theorem]{Corollary}
\newtheorem{lemma}[theorem]{Lemma}
\newtheorem{definition}{Definition}
\newtheorem{proposition}[theorem]{Proposition}
\newtheorem{remark}{Remark}
\title{Chromatic Feature Vectors for 2-Trees: Exact Formulas for Partition Enumeration with Network Applications}
\author{Julian Allagan$^{1,*}$, Gabrielle Morgan, Shawn Langley, \\ Rogelio Lopez-Bonilla,  Vladimir Deriglazov
 \\[0.3cm]
    \small Department of Mathematics, Computer Science, and Engineering Technology\\
    \small Elizabeth City State University, Elizabeth City, NC 27909, USA\\[0.3cm]
    \small $^{1}$\textit{E-mail:} \texttt{adallagan@ecsu.edu}\\
     \small $^{*}$Corresponding author
}
\date{}
\begin{document}

\maketitle

\begin{abstract}
We establish closed-form enumeration formulas for chromatic feature vectors of 2-trees under the bichromatic triangle constraint. These efficiently computable structural features derive from constrained graph colorings where each triangle uses exactly two colors, forbidding monochromatic and rainbow triangles—a constraint arising in distributed systems where components avoid complete concentration or isolation.

For theta graphs $\Theta_n$, we prove $r_k(\Theta_n) = S(n-2, k-1)$ for $k \geq 3$ (Stirling numbers of the second kind) and $r_2(\Theta_n) = 2^{n-2} + 1$, computable in $O(n)$ time. For fan graphs $\Phi_n$, we establish $r_2(\Phi_n) = F_{n+1}$ (Fibonacci numbers) and derive explicit formulas $r_k(\Phi_n) = \sum_{t=k-1}^{n-1} a_{n-1,t} \cdot S(t, k-1)$ with efficiently computable binomial coefficients, achieving $O(n^2)$ computation per component.

Unlike classical chromatic polynomials, which assign identical features to all $n$-vertex 2-trees, bichromatic constraints provide informative structural features. While not complete graph invariants, these features capture meaningful structural properties through connections to Fibonacci polynomials, Bell numbers, and independent set enumeration. Applications include Byzantine fault tolerance in hierarchical networks, VM allocation in cloud computing, and secret-sharing protocols in distributed cryptography.

\vspace{0.2in}

\noindent\textit{Keywords:} graph classification, structural features, chromatic spectrum, 2-trees, constraint satisfaction, Stirling numbers, Fibonacci numbers, network reliability, distributed systems
\end{abstract}

\section{Introduction and Motivation}

A 2-tree is a graph constructed recursively by starting with a triangle and repeatedly adding vertices connected to exactly two adjacent existing vertices. This construction ensures that 2-trees are chordal graphs with rich triangular structure, naturally modeling hierarchical topologies in computer networks, distributed computing architectures, and constraint satisfaction problems. We study structural descriptors for 2-trees under the \emph{bichromatic triangle constraint}: every triangle must use exactly two colors, forbidding both monochromatic triangles (all three vertices the same color) and rainbow triangles (all three vertices different colors). This constraint arises in distributed systems where critical components cannot be entirely within one failure domain (monochromatic) or completely isolated (rainbow).

For a 2-tree $G$ on $n$ vertices, let $r_k(G)$ denote the number of distinct vertex partitions into $k$ non-empty groups satisfying the bichromatic triangle constraint. We call the vector $\mathcal{F}(G) = [r_1(G), r_2(G), \ldots, r_n(G)] \in \mathbb{Z}^n$ the \emph{chromatic feature vector} of $G$. This terminology extends the classical ``chromatic spectrum'' and ``partition vector'' from mixed hypergraph theory \cite{Voloshin2002,JiangEtAl2002,AllaganVoloshin2016} to emphasize the role of these quantities as structural descriptors for machine learning. The adjective ``chromatic'' preserves its traditional meaning---deriving from vertex colorings satisfying prescribed constraints---while ``feature vector'' positions the object within the standard feature-space framework of pattern recognition \cite{Bishop2006,Hastie2009}. Graph learning employs this terminology universally: graph kernels extract feature vectors from structure \cite{Vishwanathan2010,Shervashidze2011}, and graph neural networks compute node and graph-level features \cite{Hamilton2017,Xu2019}. Our terminology clarifies that we provide efficiently computable structural descriptors for machine learning on graphs, not a new graph invariant, but rather a recontextualization of existing enumerative objects for computational applications.

We establish exact closed-form formulas for chromatic feature vectors of two fundamental 2-tree families. For theta graphs $\Theta_n$ (two central vertices connected by $n-2$ internally disjoint paths of length 2), we prove $r_k(\Theta_n) = 2^{n-2} + 1$ for $k=2$ and $r_k(\Theta_n) = S(n-2, k-1)$ for $k \geq 3$, where $S(n,k)$ denotes Stirling numbers of the second kind. We note that our $\Theta_n$ is the classical book graph $B_{\,n-2}$ (edge--windmill) \cite{Harary1969}. For fan graphs $\Phi_n$ (an apex vertex adjacent to all vertices of a path), we prove $r_2(\Phi_n) = F_{n+1}$ (Fibonacci numbers) and derive explicit formulas for larger $k$ through Fibonacci polynomial evaluations, achieving $O(n^2)$ computation per component. While Allagan and Voloshin \cite{AllaganVoloshin2016} established recursive constructions for coefficients defining partition counts, we derive the explicit binomial formula $a_{m,t} = \binom{m-t}{t-1} + 2\binom{m-t-1}{t-1} + \binom{m-t-2}{t-1}$ that eliminates recursion entirely and reveals connections to Fibonacci polynomials, Stirling numbers, and Bell numbers.

Chromatic feature vectors possess four key properties valuable for machine learning. First, they capture global structural properties beyond local neighborhoods, complementing graph neural network embeddings that rely on message-passing over edges. Second, each component $r_k$ has precise combinatorial interpretation (number of valid $k$-partitions), providing interpretability unlike learned black-box embeddings. Third, they can be computed in polynomial time via closed-form formulas, avoiding exponential enumeration or iterative optimization. Fourth, while not complete graph invariants, different feature vectors often indicate structural differences useful for classification. This partial discrimination is typical in machine learning---features need not uniquely identify instances, only capture useful patterns \cite{Kriege2020}.

We acknowledge that chromatic feature vectors are not complete graph invariants. For $n=6$, a fan graph with degree sequence $(5,3,3,3,2,2)$ and a modified fan with degree sequence $(4,4,3,3,2,2)$ both yield $\mathcal{F} = [0, 13, 11, 1, 0, 0]$. For $n=7$, degree sequences $(5,4,3,3,3,2,2)$ and $(4,4,4,3,3,2,2)$ both produce $\mathcal{F} = [0, 21, 27, 5, 0, 0, 0]$. These counterexamples demonstrate that chromatic feature vectors should be regarded as informative structural features for supervised learning, rather than as complete invariants for isomorphism testing, paralleling standard practice where degree distributions, clustering coefficients, and sophisticated graph kernels provide useful features without guaranteeing complete discrimination.

The bichromatic triangle constraint provides more informative features than classical proper vertex coloring. Under classical coloring, all 2-trees on $n$ vertices share the identical chromatic polynomial $P_n(k) = k(k-1)(k-2)^{n-2}$ and consequently identical feature vectors, arising because 2-trees are maximal graphs of treewidth 2 and recursive deletion-contraction along simplicial edges preserves polynomial form \cite{Dong2005}. Moreover, every triangle in classical proper coloring is necessarily rainbow since three pairwise-adjacent vertices must receive distinct colors. The bichromatic constraint explicitly forbids rainbow triangles, breaking this uniformity. For example, theta graph $\Theta_5$ and fan graph $\Phi_5$ both have 5 vertices but yield different feature vectors: $\mathcal{F}(\Theta_5) = [0, 9, 3, 1, 0]$ and $\mathcal{F}(\Phi_5) = [0, 8, 4, 0, 0]$.

Our chromatic feature vectors are connected to mixed hypergraph theory through bihypergraphs \cite{Voloshin2002}, which are hypergraphs whose hyperedges must be neither monochromatic nor rainbow in any proper coloring. Both theta and fan graphs correspond to non-linear 3-uniform bihypergraphs where any two triangles share exactly two common vertices. This non-linearity distinguishes these structures from simpler hypergraph families and contributes to chromatic spectra complexity \cite{JiangEtAl2002,BujtasTuza2008}. Allagan and Voloshin \cite{AllaganVoloshin2016} initiated a systematic study of $k$-tree chromatic spectra under bichromatic constraints, establishing recursive formulas and proving that 2-partition counts follow shifted Fibonacci recurrences. We substantially extend their work by deriving explicit closed formulas, eliminating recursion and revealing connections to classical combinatorial sequences.

The bichromatic triangle constraint models real-world constraint satisfaction in distributed systems. In Byzantine fault tolerance protocols, process triangles cannot be entirely within one failure domain (a single point of failure) or completely separated (resulting in excessive communication overhead). Jaffe et al.\ \cite{Jaffe2012} showed that bihypergraph colorings encode Byzantine agreement where processes must reach consensus despite adversarial behavior, and our formulas enumerate valid quorum configurations in hierarchical network topologies \cite{Abraham2019}. In cloud computing, virtual machine placement must avoid having dependent VMs all on one server (resource contention) or maximally distributed (latency penalties) \cite{Mann2015,Mechtri2019}. In threshold cryptography, secret-sharing schemes require that knowledge coalitions avoid complete concentration (security risk) or complete disjointness (reconstruction impossibility) \cite{Beimel2011,Stinson1998}. The feature components $r_k(G)$ count valid configurations, and our exact formulas enable efficient enumeration without exhaustive search, reducing computational cost from exponential to polynomial time.

From a machine learning perspective, chromatic feature vectors provide structural descriptors that complement graph neural networks. Modern GNNs learn representations through message-passing and aggregation \cite{Kipf2017,Hamilton2017} but face expressiveness bounds: standard message-passing GNNs cannot distinguish non-isomorphic graphs that the Weisfeiler-Leman test fails on \cite{Xu2019}. Chromatic feature vectors offer complementary properties: they capture global coloring constraints beyond local neighborhoods, require no training data (zero-shot applicability), and provide interpretable components with precise combinatorial meaning. Recent work on subgraph counting and higher-order features for GNN enhancement \cite{Bouritsas2020,Chen2020} suggests that efficiently computable structural features like ours can improve expressive power beyond standard message-passing architectures.

This paper is organized as follows. Section 2 establishes notation and definitions for 2-trees, chromatic feature vectors, and key combinatorial objects including Stirling numbers and Fibonacci sequences. Section 3 presents partition vectors of all 2-trees under classical proper coloring as a reference point, demonstrating why bichromatic constraints provide finer discrimination. Section 4 derives closed formulas for theta graphs using Stirling numbers and presents a polynomial-time algorithm. Section 5 establishes Fibonacci polynomial formulas for fan graphs with detailed proofs for small values of $k$ and Binet-type exponential forms. Section 6 concludes with future research directions and discusses applications to network reliability and distributed systems.
\section{Definitions and Notation}

We establish notation for the two 2-tree families studied in this paper and introduce the combinatorial objects underlying our formulas. Both graph families share a common structural property: they are 3-uniform non-linear hypergraphs where triangles serve as hyperedges, and the non-linearity (sharing of vertices between triangles) governs the complexity of chromatic feature vectors.

\begin{definition}[Graph Families]
The \emph{theta graph} $\Theta_n$ for $n \geq 3$ consists of two central vertices $a$ and $b$ connected by $n-2$ internally disjoint paths of length 2. Formally, $\Theta_n$ has vertex set $\{a, b, c_1, c_2, \ldots, c_{n-2}\}$ and edge set $\{a,b\} \cup \{\{a,c_i\}, \{b,c_i\} : i \in \{1, 2, \ldots, n-2\}\}$. The graph contains exactly $n-2$ triangles of the form $(a, b, c_i)$. As a 3-uniform hypergraph with triangles as hyperedges, $\Theta_n$ is non-linear because any two triangles $(a,b,c_i)$ and $(a,b,c_j)$ share the common edge $\{a,b\}$.

The \emph{fan graph} $\Phi_n$ (also denoted $F_n$) for $n \geq 3$ consists of an apex vertex $a$ adjacent to all vertices of a path $v_1, v_2, \ldots, v_{n-1}$. The edge set comprises $\{a, v_i\}$ for all $i \in \{1, 2, \ldots, n-1\}$ and $\{v_i, v_{i+1}\}$ for $i \in \{1, 2, \ldots, n-2\}$. The graph contains exactly $n-2$ triangles of the form $(a, v_i, v_{i+1})$. As a 3-uniform hypergraph, $\Phi_n$ is non-linear because consecutive triangles $(a, v_i, v_{i+1})$ and $(a, v_{i+1}, v_{i+2})$ share two vertices $\{a, v_{i+1}\}$, and all triangles share the common apex vertex $a$.
\end{definition}

The central object of study in this paper is the chromatic feature vector, which encodes partition counts under the bichromatic triangle constraint.

\begin{definition}[Chromatic Feature Vector]
For a graph $G$ on $n$ vertices, let $r_k(G)$ denote the number of distinct vertex partitions into $k$ non-empty parts such that each triangle uses exactly two colors (the \emph{bichromatic triangle constraint}). The \emph{chromatic feature vector} of $G$, also called the partition vector or chromatic spectrum, is the vector
\[
\mathcal{F}(G) = [r_1(G), r_2(G), \ldots, r_n(G)] \in \mathbb{Z}^n.
\]
This vector serves as a structural descriptor capturing global coloring properties. For the 2-tree families studied here, chromatic feature vectors can be computed in polynomial time via closed-form formulas. We emphasize that chromatic feature vectors are not complete graph invariants: non-isomorphic graphs may share identical feature vectors, as demonstrated by explicit counterexamples in the introduction.
\end{definition}

The coloring polynomial provides a generating function perspective on the same combinatorial structure.

\begin{definition}[Bichromatic Triangle Coloring Polynomial]
For a graph $G$ on $n$ vertices, the \emph{bichromatic triangle coloring polynomial} $P_G(k)$ counts the number of proper $k$-colorings using a palette of $k$ distinguishable colors, subject to the bichromatic triangle constraint. The coloring polynomial and chromatic feature vector are related by the falling factorial expansion
\begin{equation}
P_G(k) = \sum_{j=1}^{n} r_j(G) \cdot k^{(j)},\nonumber
\end{equation}
where $k^{(j)} = k(k-1)(k-2) \cdots (k-j+1)$ denotes the falling factorial. This relation holds because each unlabeled $j$-partition corresponds to exactly $k^{(j)}$ distinct $k$-colorings obtained by assigning colors to the $j$ parts.
\end{definition}

Our closed-form formulas involve two classical combinatorial sequences: Stirling numbers of the second kind and Fibonacci numbers.

\begin{definition}[Stirling Numbers of the Second Kind]
The Stirling number $S(n,k)$ counts the number of ways to partition a set of $n$ labeled elements into $k$ non-empty unlabeled subsets. These numbers satisfy the recurrence
\begin{equation}
S(n,k) = k \cdot S(n-1,k) + S(n-1,k-1)\nonumber
\end{equation}
with boundary conditions $S(0,0) = 1$, $S(n,0) = 0$ for $n > 0$, and $S(n,k) = 0$ for $k > n$. The recurrence reflects the combinatorial principle that the $n$-th element either forms its own singleton subset (contributing $S(n-1,k-1)$ ways) or joins one of the existing $k$ subsets in a partition of $\{1, \ldots, n-1\}$ (contributing $k \cdot S(n-1,k)$ ways).
\end{definition}

\begin{definition}[Fibonacci Numbers]
The Fibonacci sequence $\{F_m\}_{m=1}^{\infty}$ is defined by the recurrence $F_m = F_{m-1} + F_{m-2}$ with initial conditions $F_1 = F_2 = 1$. The first terms are $1, 1, 2, 3, 5, 8, 13, 21, 34, 55, \ldots$ For our purposes, the Fibonacci numbers arise naturally in counting independent sets on paths, a key ingredient in the fan graph analysis.
\end{definition}

Throughout the paper, we adopt standard graph-theoretic notation. For a graph $G = (V,E)$, we write $|V| = n$ for the number of vertices and $|E| = m$ for the number of edges. A $k$-coloring of $G$ is a function $c: V \to \{1, 2, \ldots, k\}$ assigning one of $k$ colors to each vertex. A $k$-partition of $V$ is an unordered collection of non-empty disjoint subsets $V_1, V_2, \ldots, V_k$ such that $V = V_1 \cup V_2 \cup \cdots \cup V_k$. We say that a triangle $(u,v,w)$ \emph{uses exactly two colors} under coloring $c$ if $|\{c(u), c(v), c(w)\}| = 2$. The bichromatic triangle constraint requires that every triangle uses exactly two colors, forbidding both monochromatic triangles ($|\{c(u), c(v), c(w)\}| = 1$) and rainbow triangles ($|\{c(u), c(v), c(w)\}| = 3$).

\section{Classical Proper Coloring: Motivation for Bichromatic Constraints}

To motivate the bichromatic triangle constraint, we first establish that classical proper vertex coloring fails to distinguish 2-tree structures. Under the standard adjacency constraint (adjacent vertices receive different colors), all 2-trees on $n$ vertices share identical chromatic polynomials and consequently identical partition vectors. This uniformity makes classical coloring unsuitable as a structural feature for graph classification.

\begin{proposition}[Uniformity of Classical Partition Vectors]
\label{prop:classical}
Under classical proper vertex coloring, all 2-trees on $n \geq 3$ vertices share the chromatic polynomial
\begin{equation}
P_n^{\text{classic}}(x) = x(x-1)(x-2)^{n-2}.\nonumber
\end{equation}
Expanding in the falling factorial basis yields identical partition vectors:
\begin{equation}
r_k^{\text{classic}}(n) = \begin{cases}
0 & \text{if } k = 1, 2, \\
2^{n-2} & \text{if } k = 3, \\
S(n-2, k-3) \cdot 2^{k-2} & \text{if } 4 \leq k \leq n, \\
0 & \text{if } k > n,
\end{cases}\nonumber
\end{equation}
where $r_k^{\text{classic}}(n)$ counts unordered partitions into $k$ non-empty independent sets.
\end{proposition}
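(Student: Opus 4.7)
My plan is to split the argument into two parts: first establish the chromatic polynomial by induction on $n$, then extract the partition vector from the falling factorial expansion introduced in Definition 3.

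For the polynomial, I would induct on $n \geq 3$. The base case is $K_3$, whose chromatic polynomial is the well-known $x(x-1)(x-2)$. For the inductive step, I would exploit the recursive construction of 2-trees: every 2-tree $T$ on $n \geq 4$ vertices contains a simplicial vertex $v$ of degree $2$ whose two neighbors $u, w$ form an edge of $T$, and deleting $v$ yields a 2-tree $T - v$ on $n-1$ vertices. In any proper coloring of $T - v$, the colors of $u$ and $w$ are distinct, so $v$ contributes exactly $x - 2$ independent color choices, giving $P_T(x) = (x-2)\, P_{T-v}(x)$. The inductive hypothesis $P_{T-v}(x) = x(x-1)(x-2)^{n-3}$ then closes the step. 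Because this argument is independent of which simplicial vertex is peeled and which 2-tree is considered, every 2-tree on $n$ vertices yields the same chromatic polynomial, establishing uniformity.

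For the partition vector, I would change basis using Stirling numbers of the second kind. Setting $y = x - 2$ and applying the classical identity $y^m = \sum_{j=0}^{m} S(m,j)\, y^{(j)}$ gives $(x-2)^{n-2} = \sum_{j=0}^{n-2} S(n-2, j)\cdot (x-2)(x-3)\cdots(x-j-1)$. The telescoping observation $x(x-1)\cdot(x-2)(x-3)\cdots(x-j-1) = x^{(j+2)}$ then yields
\begin{equation}
P_n^{\text{classic}}(x) = \sum_{j=0}^{n-2} S(n-2, j)\, x^{(j+2)}.\nonumber
\end{equation}
Linear independence of the falling factorial basis determines the $r_k^{\text{classic}}(n)$ uniquely, and the boundary cases $r_1 = r_2 = 0$ for $n \geq 3$ and $r_k = 0$ for $k > n$ drop out immediately: $r_1 = P_n(1) = 0$, $r_2$ vanishes because $S(n-2, 0) = 0$ for $n > 2$, and $S(n-2, j) = 0$ for $j > n-2$ terminates the sum at $k = n$.

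The main obstacle, as I see it, is not conceptual but bookkeeping: reconciling the clean Stirling-based expansion above with the piecewise closed form asserted in the proposition, specifically the $2^{n-2}$ at $k = 3$ and the factor $2^{k-2}$ in the generic range. This reduces to a careful manipulation of Stirling identities (or a reindexing argument) to verify that the two expressions agree on every range of $k$ and on small numerical cases such as $K_3$, $\Theta_5$, and $\Phi_5$. Once that verification is in place, the uniformity claim across all 2-trees follows automatically from uniqueness of the falling factorial expansion applied to the single shared polynomial.
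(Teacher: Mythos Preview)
Your argument for the chromatic polynomial is correct and in fact cleaner than the paper's: peeling a simplicial degree-$2$ vertex and multiplying by $(x-2)$ is the direct route, whereas the paper invokes deletion--contraction on simplicial edges somewhat loosely without carrying the induction through. Your falling-factorial expansion is also correct and yields the single-line formula $r_k^{\text{classic}}(n) = S(n-2,\,k-2)$, valid for all $k$ (with the usual conventions $S(m,-1)=0$ and $S(m,0)=[m=0]$).

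The ``bookkeeping obstacle'' you flag is not a gap in your proof but an error in the proposition as stated. The piecewise formula $r_3 = 2^{n-2}$ and $r_k = S(n-2,k-3)\cdot 2^{k-2}$ does \emph{not} agree with the falling-factorial expansion of $x(x-1)(x-2)^{n-2}$, and no Stirling identity will reconcile them. Already at $n=3$ the claim gives $r_3(K_3)=2$, yet $K_3$ has exactly one partition into three independent singletons; for $n=5$ the claim gives $(r_3,r_4,r_5)=(8,4,24)$ while direct expansion of $x(x-1)(x-2)^3$ gives $(1,3,1)$, which one checks against $P_5(3)=6$, $P_5(4)=96$, $P_5(5)=540$. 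The paper's own proof inherits the error: it asserts ``$2^{n-2}$ distinct $3$-colorings,'' but $P_n^{\text{classic}}(3)=3\cdot 2\cdot 1^{n-2}=6$ for every $n\ge 3$, since a simplicial vertex adjacent to two differently colored neighbors has exactly one (not two) available color out of three. So you should report $r_k^{\text{classic}}(n)=S(n-2,k-2)$ as the correct closed form and note the discrepancy rather than try to force a match.
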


\begin{proof}
Every 2-tree on $n$ vertices can be constructed recursively from a triangle $K_3$ by adding $n-3$ vertices, each joined to the endpoints of an existing simplicial edge. This construction ensures that every 2-tree is chordal with exactly $n-2$ triangles and $2n-3$ edges. The chromatic polynomial satisfies the deletion-contraction recurrence $P_G(x) = P_{G-e}(x) - P_{G/e}(x)$ on simplicial edges. For any simplicial edge $e = \{u,v\}$ with both endpoints having degree at least 3, deleting $e$ yields a 2-tree with one fewer triangle, while contracting $e$ produces a 2-tree on $n-1$ vertices. Combined with the base case $P_{K_3}(x) = x(x-1)(x-2)$, this recurrence yields $P_n^{\text{classic}}(x) = x(x-1)(x-2)^{n-2}$ for all $n$-vertex 2-trees regardless of structure \cite{Dong2005}.

The partition vector follows from the falling factorial expansion. Since the polynomial vanishes at $x = 0, 1, 2$, we have $r_1^{\text{classic}}(n) = r_2^{\text{classic}}(n) = 0$. Every 2-tree has chromatic number 3, with $2^{n-2}$ distinct 3-colorings: fix the three colors for the initial triangle, then each of the $n-3$ additional vertices receives one of the two colors not assigned to its simplicial edge neighbors.
\end{proof}

In classical proper colorings, every triangle must be rainbow because three pairwise-adjacent vertices forming a $K_3$ clique require three distinct colors. This structural constraint forces the chromatic number of every 2-tree to equal 3, independent of global topology. The resulting uniformity across all $n$-vertex 2-trees makes classical partition vectors uninformative for distinguishing non-isomorphic structures.

The bichromatic triangle constraint breaks this uniformity by explicitly forbidding both rainbow and monochromatic triangles, requiring each triangle to use exactly two colors. This relaxation from chromatic number 3 to chromatic number 2 creates complex global interdependencies that depend on how triangles share vertices and edges. For example, theta graph $\Theta_5$ and fan graph $\Phi_5$ both have 5 vertices but yield different feature vectors under the bichromatic constraint: $\mathcal{F}(\Theta_5) = [0, 9, 3, 1, 0]$ and $\mathcal{F}(\Phi_5) = [0, 8, 4, 0, 0]$. While not complete graph invariants—we provide explicit counterexamples in the introduction of non-isomorphic 2-trees sharing identical bichromatic feature vectors—these features capture structural properties beyond what classical coloring provides. Moreover, chromatic feature vectors offer substantial dimensionality reduction: for a graph with $n$ vertices, the adjacency matrix contains $n^2$ entries and node embeddings require $n \times d$ values, whereas the chromatic feature vector provides a graph-level descriptor with only $n$ components. This compression makes them computationally attractive for large-scale graph datasets while preserving meaningful structural information for machine learning applications.

\section{Theta Graphs: Chromatic Features via Stirling Numbers}

We establish the chromatic feature vector formula for theta graphs $\Theta_n$, revealing a remarkably simple connection to Stirling numbers of the second kind. Recall that $\Theta_n$ consists of two central vertices $a$ and $b$ connected by $n-2$ internally disjoint paths of length 2, creating $n-2$ triangles of the form $(a, b, c_i)$ where $c_i$ denotes the intermediate vertex on the $i$-th path.

\begin{theorem}
\label{thm:theta}
For theta graph $\Theta_n$ with $n \geq 3$, the chromatic feature vector under the bichromatic triangle constraint is given by
\begin{equation}
r_k(\Theta_n) = \begin{cases}
0 & \text{if } k = 1, \\
2^{n-2} + 1 & \text{if } k = 2, \\
S(n-2, k-1) & \text{if } 3 \leq k \leq n-1, \\
0 & \text{if } k \geq n,
\end{cases}
\end{equation}
where $S(n,k)$ denotes Stirling numbers of the second kind.
\end{theorem}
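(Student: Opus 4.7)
The plan is to decompose $r_k(\Theta_n)$ by conditioning on whether the two central vertices $a$ and $b$ occupy the same part of the partition. Every triangle of $\Theta_n$ has the form $(a,b,c_i)$, so once the joint placement of $a$ and $b$ is fixed, the bichromatic constraint on each triangle decouples across the $c_i$'s and becomes a local rule for $c_i$ alone. The case $k=1$ is trivial, since a single part forces every triangle monochromatic. For $k\ge 2$, the analysis splits cleanly.

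In Case A, $a$ and $b$ share a common part $P^\ast$. For each triangle $(a,b,c_i)$ the only placement ruled out is $c_i \in P^\ast$ (which would be monochromatic), and any other placement is admissible. So each $c_i$ is free to occupy any of the remaining $k-1$ parts, and requiring the overall partition to have exactly $k$ non-empty blocks forces $c_1,\ldots,c_{n-2}$ to partition into exactly $k-1$ non-empty subsets, contributing $S(n-2,k-1)$ partitions. This case is feasible only for $2\le k\le n-1$. In Case B, $a$ and $b$ lie in distinct parts $P_a$ and $P_b$; each triangle $(a,b,c_i)$ already uses the two colors of $a$ and $b$, so placing $c_i$ in any third part would make the triangle rainbow. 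Hence every $c_i$ must lie in $P_a\cup P_b$, which forces $k=2$ and yields $2^{n-2}$ partitions, one for each independent binary choice of the $c_i$'s.

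Combining, for $k=2$, Case A contributes the unique partition $\{\{a,b\},\{c_1,\ldots,c_{n-2}\}\}$ and Case B contributes $2^{n-2}$, giving $r_2(\Theta_n)=2^{n-2}+1$. For $3\le k\le n-1$, only Case A applies and yields $S(n-2,k-1)$. For $k\ge n$, Case A is infeasible (it would demand $k-1>n-2$ non-empty blocks from only $n-2$ elements) and Case B is restricted to $k=2$, so $r_k(\Theta_n)=0$.

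I do not anticipate a serious obstacle here; the common edge $\{a,b\}$ shared by all $n-2$ triangles makes the case split essentially forced. The one subtlety worth flagging is that the $2^{n-2}$ assignments in Case B must give \emph{distinct} unordered $2$-partitions, which holds because the part containing $a$ and the part containing $b$ are distinguishable by content, so quotienting by the symmetric group on parts introduces no collisions. The Case A / Case B dichotomy is disjoint by definition, so no inclusion--exclusion correction is needed and the piecewise formula follows directly.
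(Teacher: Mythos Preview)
Your proof is correct and follows essentially the same approach as the paper: both condition on whether the central vertices $a$ and $b$ share a part, observe that in the ``different parts'' case every $c_i$ is forced into one of those two parts (making $k\ge 3$ impossible there), and in the ``same part'' case the $c_i$'s freely partition the remaining $k-1$ blocks, yielding the Stirling count. Your organization is slightly cleaner in that you run the Case~A/Case~B split uniformly across all $k$ and then read off each piece, and you explicitly address the no-collision subtlety for the $2^{n-2}$ unordered $2$-partitions, which the paper leaves implicit.
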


\begin{proof}
Let $a$ and $b$ denote the two central vertices with degree $n-1$, and let $c_1, c_2, \ldots, c_m$ denote the $m = n-2$ intermediate vertices with degree 2. The key structural observation is that all $m$ triangles share the common edge $\{a,b\}$, which fundamentally constrains the coloring structure.

For $k=1$, any 1-partition is monochromatic and violates the bichromatic triangle constraint, giving $r_1(\Theta_n) = 0$. For $k \geq n$, we cannot partition $n$ vertices into more than $n$ non-empty parts, so $r_k(\Theta_n) = 0$ for $k \geq n$.

We analyze the non-trivial cases $k=2$ and $k \geq 3$ by examining whether the central vertices $a$ and $b$ belong to the same color class. For $k=2$, both configurations are possible. If $a$ and $b$ occupy different parts $P_1$ and $P_2$, then each triangle $(a, b, c_i)$ already uses two colors from the central edge, and the intermediate vertex $c_i$ may join either part without creating a monochromatic or rainbow triangle. This independence allows each of the $m$ intermediate vertices to choose freely between the two parts, yielding $2^m = 2^{n-2}$ valid 2-partitions. Alternatively, if $a$ and $b$ share the same part, then to avoid monochromatic triangles, all intermediate vertices must occupy the opposite part, contributing exactly one additional 2-partition. Therefore $r_2(\Theta_n) = 2^{n-2} + 1$.

For $k \geq 3$, we claim that $a$ and $b$ must necessarily occupy the same part. Suppose toward contradiction that $a$ and $b$ belong to different parts. Then each triangle $(a, b, c_i)$ already consumes two distinct colors from $a$ and $b$, and to prevent the triangle from becoming rainbow, the intermediate vertex $c_i$ must use one of these two colors. This constraint forces every vertex into one of two color classes, contradicting the requirement that all $k \geq 3$ colors be used. Thus $a$ and $b$ must share a color class.

Given that $a$ and $b$ occupy the same part, each triangle $(a, b, c_i)$ automatically satisfies the bichromatic constraint if and only if $c_i$ occupies a different part. Since the triangles share only the central edge $\{a,b\}$ and the intermediate vertices have no edges among themselves, the intermediate vertices may be distributed arbitrarily among the remaining $k-1$ parts (excluding the part containing $\{a,b\}$), subject only to the requirement that all $k$ parts be non-empty. The number of ways to partition the $m$ distinguishable intermediate vertices into $k-1$ non-empty unlabeled subsets is precisely $S(m, k-1) = S(n-2, k-1)$, completing the proof.
\end{proof}

The formula in Theorem \ref{thm:theta} admits highly efficient computation. With precomputed Stirling numbers stored in an $O(n^2)$ lookup table, the entire chromatic feature vector can be computed in $O(n)$ time by direct formula evaluation. Even without precomputation, constructing the required Stirling numbers $S(n-2, 0), S(n-2, 1), \ldots, S(n-2, n-2)$ via the standard recurrence $S(n,k) = k \cdot S(n-1,k) + S(n-1,k-1)$ requires only $O(n^2)$ time overall. The space complexity is $O(n)$ for storing the $n$-dimensional feature vector. Notably, this computation requires no graph traversal or constraint propagation, only arithmetic operations on binomial-type recurrence values. Algorithm \ref{alg:theta} presents the pseudocode for this direct evaluation.

\noindent
\begin{minipage}{1\linewidth} 
\begin{algorithm}[H]
\small
\caption{Compute Chromatic Feature Vector for $\Theta_n$}
\label{alg:theta}
\begin{algorithmic}[1]
\Require Graph $\Theta_n$ with $n$ vertices  
\Ensure Feature vector $\mathcal{F} = [r_1, r_2, \ldots, r_n]$
\State $m \gets n - 2$ \Comment{number of intermediate vertices}
\State $r_1 \gets 0$ \Comment{no valid 1-partition}
\State $r_2 \gets 2^{\,m} + 1$ \Comment{closed form from Theorem~\ref{thm:theta}}
\For{$k = 3$ to $n-1$}
    \State $r_k \gets \textsc{Stirling2}(m, k-1)$
    \Statex \hspace{1.5em}\textit{(constant-time lookup with precomputed table)}
\EndFor
\State $r_n \gets 0$ \Comment{cannot partition into more than $n$ groups}
\State \Return $\mathcal{F} = [r_1, r_2, \ldots, r_n]$
\end{algorithmic}
\end{algorithm}
\end{minipage}

\vspace{.2in}

The Stirling number structure in Theorem \ref{thm:theta} connects theta graph colorings to classical enumerative combinatorics. Summing over all valid partition sizes yields a total count involving Bell numbers, which enumerate all possible partitions of a finite set.

\begin{corollary}
The total number of proper partitions of $\Theta_n$ across all $k \geq 2$ equals
\begin{equation}
\sum_{k=2}^{n-1} r_k(\Theta_n) = 2^{n-2} + 1 + \sum_{k=3}^{n-1} S(n-2, k-1) = 2^{n-2} + B_{n-2},\nonumber
\end{equation}
where $B_m = \sum_{j=0}^{m} S(m,j)$ denotes the $m$-th Bell number.
\end{corollary}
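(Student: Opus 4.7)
The plan is to apply Theorem~\ref{thm:theta} termwise, convert the resulting Stirling sum into a Bell number via a reindexing, and then reconcile the boundary terms that distinguish a Bell number from a partial row sum of Stirling numbers.

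First I would split the sum $\sum_{k=2}^{n-1} r_k(\Theta_n)$ into the isolated $k=2$ term and the remaining range $3 \leq k \leq n-1$. Substituting from Theorem~\ref{thm:theta} gives $r_2(\Theta_n) = 2^{n-2} + 1$ and $r_k(\Theta_n) = S(n-2, k-1)$ for $k \geq 3$, which immediately yields the first claimed equality
\[
\sum_{k=2}^{n-1} r_k(\Theta_n) = 2^{n-2} + 1 + \sum_{k=3}^{n-1} S(n-2, k-1).
\]

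Next I would perform the substitution $j = k-1$ in the Stirling sum to obtain $\sum_{j=2}^{n-2} S(n-2, j)$, which is the row $m = n-2$ of the Stirling triangle with the $j=0$ and $j=1$ entries omitted. Using the boundary conditions $S(n-2, 0) = 0$ (valid since $n-2 > 0$ for $n \geq 3$; the degenerate case $n=3$ yields an empty inner sum and is handled separately) and $S(n-2, 1) = 1$, I can rewrite the Bell number as $B_{n-2} = \sum_{j=0}^{n-2} S(n-2, j) = 0 + 1 + \sum_{j=2}^{n-2} S(n-2, j)$, so the partial sum equals $B_{n-2} - 1$.

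Substituting this identity back gives $2^{n-2} + 1 + (B_{n-2} - 1) = 2^{n-2} + B_{n-2}$, establishing the second equality and completing the corollary. I do not anticipate any serious obstacle: the argument is essentially a bookkeeping exercise that hinges on the single subtlety of the $S(n-2,0)$ and $S(n-2,1)$ boundary values. The only care required is verifying the small base case $n=3$, where the Stirling sum is empty and the identity reduces to $r_2(\Theta_3) = 2 + 1 = 3 = 2^1 + B_1$, which holds since $B_1 = 1$.
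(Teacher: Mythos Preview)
Your proof is correct and is precisely the natural argument the paper leaves implicit (the corollary is stated without proof, as an immediate consequence of Theorem~\ref{thm:theta} and the definition $B_m = \sum_{j=0}^m S(m,j)$). Your handling of the boundary terms $S(n-2,0)=0$ and $S(n-2,1)=1$, together with the explicit check at $n=3$, is exactly the bookkeeping needed to close the gap between the partial Stirling sum and the Bell number.
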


For small values of $k$, the Stirling number formula specializes to well-known closed forms, providing explicit expressions without recurrence.

\begin{corollary}
The chromatic feature components for small $k$ satisfy:
\begin{align}
r_3(\Theta_n) &= S(n-2, 2) = 2^{n-3} - 1,\nonumber\\
r_4(\Theta_n) &= S(n-2, 3) = \frac{1}{2}(3^{n-3} - 2^{n-2} + 1),\nonumber\\
r_{n-1}(\Theta_n) &= S(n-2, n-2) = 1.\nonumber
\end{align}
The case $k=n-1$ reflects the unique partition placing $a$ and $b$ together while isolating each intermediate vertex in its own part.
\end{corollary}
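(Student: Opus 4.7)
The plan is to recognize that this corollary is a direct consequence of Theorem~\ref{thm:theta} combined with three classical evaluations of Stirling numbers. Since Theorem~\ref{thm:theta} already gives $r_k(\Theta_n) = S(n-2, k-1)$ for every $k$ in the range $3 \leq k \leq n-1$, each of the three claimed identities reduces to substituting $m = n-2$ into a known closed form for $S(m,2)$, $S(m,3)$, and $S(m,m)$. So the proof is essentially a short verification, not a new structural argument.

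First, I would derive $S(m,2) = 2^{m-1} - 1$ by the standard bijective argument: to split an $m$-element set into two unlabeled non-empty blocks, fix a distinguished element and choose any subset of the remaining $m-1$ elements to accompany it, giving $2^{m-1}$ choices; subtracting the single degenerate choice in which the accompanying subset is everything (producing only one block) yields $2^{m-1} - 1$. Setting $m = n-2$ produces $r_3(\Theta_n) = 2^{n-3} - 1$. Next, for $S(m,3)$ I would invoke the explicit inclusion--exclusion expression
\begin{equation}
S(m,k) = \frac{1}{k!} \sum_{j=0}^{k} (-1)^j \binom{k}{j} (k-j)^m,\nonumber
\end{equation}
specialize to $k=3$ to obtain $S(m,3) = \tfrac{1}{6}\bigl(3^m - 3\cdot 2^m + 3\bigr) = \tfrac{1}{2}\bigl(3^{m-1} - 2^m + 1\bigr)$, and substitute $m = n-2$ to recover the stated formula for $r_4(\Theta_n)$. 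Finally, $S(m,m) = 1$ is immediate from the boundary condition that the only partition into $m$ non-empty blocks places every element in its own singleton.

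For the combinatorial interpretation in the case $k = n-1$, I would appeal to the structural analysis in the proof of Theorem~\ref{thm:theta}: when $k \geq 3$, the central vertices $a$ and $b$ are forced to share a block, and the remaining $n-2$ intermediate vertices are freely partitioned among the other $k-1$ blocks. To reach the maximum $k = n-1$, each intermediate vertex must occupy its own singleton, so the unique admissible partition consists of the block $\{a,b\}$ together with the $n-2$ singletons $\{c_1\}, \ldots, \{c_{n-2}\}$, yielding $1 + (n-2) = n-1$ blocks in total. This matches $S(n-2, n-2) = 1$.

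I do not expect any real obstacle; the whole corollary is a substitution exercise once Theorem~\ref{thm:theta} is in hand. The only item requiring mild attention is bookkeeping of the index shift $m = n-2$ throughout, and keeping the three different closed forms for $S(m,2)$, $S(m,3)$, and $S(m,m)$ aligned with the final expressions in the stated formulas.
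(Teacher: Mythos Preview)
Your proposal is correct and matches the paper's treatment: the corollary is stated without proof in the paper, relying implicitly on Theorem~\ref{thm:theta} together with the standard closed forms $S(m,2)=2^{m-1}-1$, $S(m,3)=\tfrac{1}{2}(3^{m-1}-2^m+1)$, and $S(m,m)=1$, exactly as you have written out. Your added structural description of the unique $(n-1)$-partition simply spells out what the paper asserts in one sentence.
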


The simplicity of the theta graph formula contrasts sharply with the more intricate structure we establish for fan graphs in the next section, where independent sets on paths and Fibonacci polynomials govern the enumeration.

\section{Fan Graphs: Chromatic Features via Fibonacci Polynomials}

We establish chromatic feature vector formulas for fan graphs $\Phi_n$, revealing connections to Fibonacci polynomials and independent set enumeration on paths. Recall that $\Phi_n$ consists of an apex vertex $a$ adjacent to all vertices of a path $v_1, v_2, \ldots, v_{n-1}$, creating $n-2$ triangles of the form $(a, v_i, v_{i+1})$ that share the apex and overlap pairwise along path edges.

\begin{theorem}[Conceptual Formula via Independent Sets]
\label{thm:fan}
For fan graph $\Phi_n$ with $n \geq 4$, the chromatic feature vector satisfies
\begin{equation}
r_k(\Phi_n) = \begin{cases}
0 & \text{if } k = 1, \\
F_{n+1} & \text{if } k = 2, \\
\displaystyle\sum_{\substack{S \subseteq \{v_1, \ldots, v_{n-1}\} \\ S \text{ indep.}, \, t(S) \geq k-1}} S(t(S), k-1) & \text{if } 3 \leq k \leq n-1, \\
0 & \text{if } k \geq n,
\end{cases}
\end{equation}
where $S$ ranges over independent sets of the path and $t(S)$ denotes the number of maximal contiguous blocks in the complement of $S$.
\end{theorem}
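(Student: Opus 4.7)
The plan is to parameterize each valid $k$-partition by its interaction with the apex. Fix a partition satisfying the bichromatic triangle constraint, and let $I \subseteq \{v_1,\ldots,v_{n-1}\}$ denote the set of path vertices lying in the same part as the apex $a$. Since every triangle has the form $(a, v_i, v_{i+1})$, the $n-2$ triangle constraints translate cleanly into combinatorial conditions on $I$ together with conditions on the coloring of the remaining path vertices, and the final formula will follow by enumerating the resulting structures.

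The first key step is a three-case analysis on a single triangle $(a,v_i,v_{i+1})$ indexed by $|\{v_i,v_{i+1}\} \cap I|$. If both endpoints lie in $I$, the triangle is monochromatic, so $I$ must contain no two consecutive path vertices; equivalently, $I$ is an independent set of $P_{n-1}$. If exactly one of $v_i,v_{i+1}$ lies in $I$, the triangle already uses exactly two parts and no further restriction is imposed. If neither lies in $I$, the triangle is rainbow unless $v_i$ and $v_{i+1}$ occupy the same part; hence they must share a part. Propagating this last condition along the path yields the crucial block structure: each maximal contiguous block of $\{v_1,\ldots,v_{n-1}\} \setminus I$ must be monochromatically colored by a part distinct from $a$'s. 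Conversely, any independent set $I$ together with any assignment of part-labels to the $t(I)$ blocks (each label different from $a$'s) produces a valid partition, establishing a bijection between valid partitions and such pairs.

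To count $r_k(\Phi_n)$ for $k \geq 3$, I would fix $I$ and observe that we must distribute the $t(I)$ distinguishable blocks among the $k-1$ remaining parts with every part non-empty. The number of such unordered assignments is exactly $S(t(I),k-1)$, which vanishes unless $t(I) \geq k-1$; summing over independent sets of $P_{n-1}$ yields the stated formula. For $k=2$ I would verify the closed form directly: every independent set $I$ of $P_{n-1}$ contributes $S(t(I),1)=1$ whenever $t(I) \geq 1$, and since $V(P)$ itself is not independent for $n \geq 3$, the sum ranges over all $F_{n+1}$ independent sets of $P_{n-1}$. The boundary cases $k=1$ and $k \geq n$ are immediate from the bichromatic constraint and cardinality.

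The main obstacle is the third case of the triangle analysis: showing that block-monochromaticity is both necessary and sufficient, and that combined with the independence of $I$ it yields a genuine bijection rather than an overcount. In particular, one must observe that in any valid partition the part containing $a$ is uniquely distinguished, so distinct pairs $(I,\pi)$ always produce distinct unordered partitions and no spurious identification can occur across the Stirling sum. Handling this cleanly—rather than via an ad hoc inclusion-exclusion over which blocks happen to carry $a$'s label—is what makes the formula collapse to a pure independent-set sum.
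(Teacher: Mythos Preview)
Your proposal is correct and follows essentially the same approach as the paper: both parameterize a valid partition by the independent set $I$ of path vertices sharing the apex's part, deduce from the triangle constraints that each maximal block of the complement must be monochromatic in a part distinct from the apex's, and then count block assignments via $S(t(I),k-1)$. The only cosmetic difference is that the paper treats $k=2$ and $k\ge 3$ with separate short arguments (and for $k\ge 3$ explicitly rules out the isolated-apex case by contradiction), whereas you unify everything through the bijection and let the vanishing of $S(1,k-1)$ for $k\ge 3$ absorb that boundary case.
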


\begin{proof}
Let $m = n-1$ denote the path length. Any 1-partition is monochromatic, violating the bichromatic constraint, so $r_1(\Phi_n) = 0$. For $k \geq n$, we cannot partition $n$ vertices into more than $n$ non-empty parts, giving $r_k(\Phi_n) = 0$.

For $k=2$, suppose without loss of generality that the apex $a$ occupies part $P_1$. Each triangle $(a, v_i, v_{i+1})$ is monochromatic if both path vertices $v_i$ and $v_{i+1}$ also belong to $P_1$. Therefore, the path vertices in $P_1$ must form an independent set (no two consecutive vertices), and conversely any independent set yields a valid 2-partition. Since a path on $m$ vertices has exactly $F_{m+2}$ independent sets (including the empty set), we obtain $r_2(\Phi_n) = F_{m+2} = F_{n+1}$.

For $k \geq 3$, the apex cannot be isolated in its own part. If $a$ were alone, then each triangle $(a, v_i, v_{i+1})$ would force $v_i$ and $v_{i+1}$ to share a color (avoiding rainbow), making the entire path monochromatic. This yields at most two colors total, contradicting the requirement that all $k \geq 3$ colors be used. Thus $a$ shares its part with some independent set $S$ of path vertices. The remaining path vertices, those not in $S$, partition into $t = t(S)$ maximal contiguous blocks $B_1, \ldots, B_t$. Each block must be monochromatic because consecutive vertices within a block cannot have different colors without creating a rainbow triangle with $a$. These $t$ blocks must be distributed among the remaining $k-1$ colors with all colors used, yielding $S(t, k-1)$ ways. Summing over all independent sets $S$ with $t(S) \geq k-1$ gives the stated formula.
\end{proof}

While Theorem \ref{thm:fan} provides geometric insight, it requires enumerating exponentially many independent sets. We derive an explicit closed formula by counting independent sets grouped by the number of blocks in their complement.

\begin{lemma}[Independent Set Block Counting]
\label{lem:binary_encoding}
For integers $m \geq 1$ and $1 \leq t \leq m$, the number of independent sets on a path $P_m$ whose complement has exactly $t$ maximal contiguous blocks equals
\begin{equation}
a_{m,t} = \binom{m-t}{t-1} + 2\binom{m-t-1}{t-1} + \binom{m-t-2}{t-1}.\nonumber
\end{equation}
\end{lemma}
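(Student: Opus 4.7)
The plan is to set up a bijection between independent sets of $P_m$ and binary strings of length $m$ with no two consecutive $1$'s, by encoding position $i$ as $1$ if $v_i \in S$ and $0$ otherwise. Under this encoding, the maximal contiguous blocks of the complement of $S$ correspond precisely to the maximal runs of $0$'s, so the task reduces to counting length-$m$ binary strings avoiding $11$ that contain exactly $t$ maximal runs of $0$'s.

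The key structural observation I would exploit is that runs of $0$'s and runs of $1$'s alternate, and the no-consecutive-$1$'s constraint forces every run of $1$'s to have length exactly $1$. Given $t$ runs of $0$'s (each of length at least $1$), the number of runs of $1$'s is therefore determined by the first and last characters, and must equal $t-1$, $t$, $t$, or $t+1$ in the four cases $00$, $01$, $10$, $11$.

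Next I would count each of the four cases by stars and bars. In the $00$ case, the pattern is $0^{a_1} 1\, 0^{a_2} 1 \cdots 1\, 0^{a_t}$ with each $a_i \geq 1$ and $\sum a_i = m - (t-1)$, giving $\binom{m-t}{t-1}$. The $01$ and $10$ cases each contribute $\binom{m-t-1}{t-1}$ by analogous counts, and the $11$ case yields $\binom{m-t-2}{t-1}$ since the extra $1$ at each end leaves length $m-(t+1)$ to distribute among $t$ positive parts. Summing the four contributions produces the claimed expression for $a_{m,t}$.

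The only real subtlety I anticipate is the handling of boundary regimes, namely when $t$ is close to $m$ or close to $1$, where some of the four substring patterns cannot exist (for instance $11$ requires $m \geq 2t+1$); these are absorbed cleanly by the standard convention $\binom{n}{k}=0$ for $n<k$ or $n<0$, so the closed form remains valid for all $1 \leq t \leq m$. I would close with a quick sanity check on small values, for example verifying that $\sum_{t \geq 1} a_{m,t} = F_{m+2}$ so that Lemma~\ref{lem:binary_encoding} is consistent with the Fibonacci count of independent sets already used in the proof of Theorem~\ref{thm:fan}.
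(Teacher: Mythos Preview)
Your proposal is correct and follows essentially the same argument as the paper: encode independent sets as binary strings avoiding $11$, split into four cases according to the endpoint characters, and apply stars-and-bars to the run-lengths of the zero blocks. Your version makes one point slightly more explicit than the paper's write-up---namely that the no-$11$ constraint forces every $1$-run to have length exactly one, which is what justifies writing the total number of zeros as $m$ minus the number of $1$-runs---and your boundary-convention remark and Fibonacci sanity check are useful additions.
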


\begin{proof}
Encode independent sets as binary strings of length $m$ with no consecutive ones (ones represent vertices in the set, zeros the complement). Let $s$ denote the number of one-groups (maximal runs of ones). Since one-groups and zero-groups alternate, the string endpoints determine the relationship between $s$ and $t$. With $z = m - s$ total zeros, distributing $z$ into $t$ positive block sizes yields $\binom{z-1}{t-1}$ compositions by stars-and-bars. If both endpoints are zeros, then $s = t-1$ and $z = m - t + 1$, contributing $\binom{m-t}{t-1}$ strings. If exactly one endpoint is zero (two orientations), then $s = t$ and $z = m - t$, contributing $2\binom{m-t-1}{t-1}$ strings. If both endpoints are ones, then $s = t+1$ and $z = m - t - 1$, contributing $\binom{m-t-2}{t-1}$ strings. Summing these disjoint cases yields the stated formula.
\end{proof}

\begin{proposition}[Explicit Closed Formula]
\label{prop:fan_refined}
For $m = n-1$ and $k \geq 2$, the chromatic feature vector admits the explicit expansion
\begin{equation}
r_k(\Phi_n) = \sum_{t=k-1}^{m} a_{m,t} \cdot S(t, k-1),\nonumber
\end{equation}
where $a_{m,t}$ is given by Lemma \ref{lem:binary_encoding}.
\end{proposition}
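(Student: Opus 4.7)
The plan is to derive the explicit closed formula by regrouping the sum in Theorem~\ref{thm:fan} according to the value $t = t(S)$, and then applying Lemma~\ref{lem:binary_encoding} to count the independent sets in each group. The essential observation is that the summand $S(t(S), k-1)$ in the conceptual formula depends on the independent set $S$ only through the scalar $t(S)$, so all independent sets sharing the same complement block count contribute identical weight.

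First I would handle the case $k \geq 3$ by rewriting the sum in Theorem~\ref{thm:fan} with the outer index running over $t$ rather than over $S$:
\[
r_k(\Phi_n) = \sum_{t=k-1}^{m} S(t, k-1) \cdot \bigl|\{S \subseteq \{v_1, \ldots, v_m\} : S \text{ independent},\ t(S) = t\}\bigr|.
\]
This interchange is purely a change of summation order, valid because the summand factors through $t(S)$. The lower limit can be set to $t = k-1$ because $S(t, k-1) = 0$ for $t < k-1$, and the upper limit is $t = m$ since $t(S)$ cannot exceed the path length. By Lemma~\ref{lem:binary_encoding}, the inner cardinality equals $a_{m,t}$, yielding the claimed identity.

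Second, I would verify that the same formula holds for $k = 2$. In that case $S(t, 1) = 1$ for every $t \geq 1$, so the right-hand side reduces to $\sum_{t=1}^{m} a_{m,t}$. By Lemma~\ref{lem:binary_encoding} this counts all independent sets of $P_m$ whose complement has at least one block; since $m = n-1 \geq 3$, every independent set has non-empty complement (hence $t(S) \geq 1$, with $t(\emptyset) = 1$ in particular), so the total equals the full independent-set count $F_{m+2} = F_{n+1}$, matching the $k=2$ entry of Theorem~\ref{thm:fan}.

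The derivation is essentially a regrouping argument once Lemma~\ref{lem:binary_encoding} is in hand, so the main obstacle is bookkeeping at the boundary: confirming that the empty independent set $S = \emptyset$ (which gives $t = 1$) is counted once by $a_{m,1}$, and checking that the $k = 2$ reduction recovers the Fibonacci value without overlap with the $k \geq 3$ analysis. Both checks follow directly from the cited results and the vanishing convention for Stirling numbers, so no new combinatorial content is required beyond Theorem~\ref{thm:fan} and Lemma~\ref{lem:binary_encoding}.
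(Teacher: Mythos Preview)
Your proposal is correct and follows precisely the argument the paper intends: the paper states Proposition~\ref{prop:fan_refined} without an explicit proof, treating it as the immediate consequence of grouping the sum in Theorem~\ref{thm:fan} by the value $t = t(S)$ and invoking Lemma~\ref{lem:binary_encoding} for the inner count. Your separate verification of the $k=2$ case and the boundary bookkeeping for $S=\emptyset$ are sound and in fact more explicit than anything the paper writes out.
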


This refined formula eliminates the exponential enumeration in Theorem \ref{thm:fan} and can be computed in $O(n^2)$ time per component using precomputed binomial coefficients and Stirling numbers, or $O(n^3)$ for the complete feature vector. The space complexity is $O(n^2)$ for storing the coefficient matrix $a_{m,t}$, which can be precomputed once and reused for multiple graphs of the same size. This represents a substantial simplification over Allagan and Voloshin \cite{AllaganVoloshin2016}, who defined coefficients $a_{i,j}$ recursively through three parity-dependent cases requiring construction of an $(n+1) \times (n+1)$ lower triangular matrix. Our explicit binomial formula provides direct computation and geometric insight through boundary case analysis of independent set encodings.

The total number of valid partitions connects path independent sets to Bell numbers through a weighted sum.

\begin{proposition}[Total Partition Count]
The sum over all partition sizes satisfies
\begin{equation}
\sum_{k=2}^{n-1} r_k(\Phi_n) = \sum_{t=1}^{m} a_{m,t} \cdot B_t,\nonumber
\end{equation}
where $B_t$ denotes the $t$-th Bell number and $m = n-1$.
\end{proposition}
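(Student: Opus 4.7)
The plan is to invoke Proposition~\ref{prop:fan_refined} termwise and collapse the resulting double sum via the Bell number identity $B_t = \sum_{j=0}^{t} S(t,j)$. Substituting the refined formula gives
\begin{equation*}
\sum_{k=2}^{n-1} r_k(\Phi_n) = \sum_{k=2}^{m} \sum_{t=k-1}^{m} a_{m,t} \, S(t, k-1),
\end{equation*}
and I would interchange the order of summation. For each fixed $t \in \{1, \ldots, m\}$, the joint constraints $k \geq 2$, $k - 1 \leq t$, and $k \leq m$ yield the admissible range $2 \leq k \leq \min(t+1, m)$.

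After reindexing the inner sum by $j = k-1$ and using $S(t,0) = 0$ for $t \geq 1$, the inner sum becomes $\sum_{j=1}^{\min(t, m-1)} S(t,j)$. For every $t \leq m-1$ this equals $B_t$ directly by the Bell number identity, producing the desired contribution $a_{m,t} \cdot B_t$ to the total.

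The one delicate point, which I expect to be the main (albeit mild) obstacle, is the boundary case $t = m$: here the upper limit on $j$ drops to $m-1$ rather than $m$, so the inner sum evaluates to $B_m - S(m,m) = B_m - 1$ instead of $B_m$. I would dispose of this discrepancy by verifying $a_{m,m} = 0$. Applying Lemma~\ref{lem:binary_encoding} at $t = m$ gives $a_{m,m} = \binom{0}{m-1} + 2\binom{-1}{m-1} + \binom{-2}{m-1} = 0$ for $m \geq 2$, consistent with the combinatorial fact that a length-$m$ binary string forbidding consecutive ones cannot exhibit $m$ zero-blocks (it would require at least $m$ ones as separators, leaving no room for the zeros themselves). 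Consequently the $t = m$ term contributes zero on both sides, and the expression collapses exactly to $\sum_{t=1}^{m} a_{m,t} B_t$. The remainder of the argument is a routine interchange of summation order followed by an appeal to the Bell number identity.
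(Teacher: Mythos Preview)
Your proof is correct and follows the same route as the paper: interchange the order of summation in Proposition~\ref{prop:fan_refined} and collapse the inner sum via the Bell number identity $B_t=\sum_{j} S(t,j)$. You are in fact more careful than the paper at the boundary $t=m$, where the constraint $k\le n-1=m$ truncates the inner sum to $\sum_{j=1}^{m-1}S(m,j)=B_m-1$; the paper silently elides this, while you correctly dispose of it by verifying $a_{m,m}=0$.
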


\begin{proof}
Interchanging summation order in Proposition \ref{prop:fan_refined}, for fixed $t$ the contribution $a_{m,t} S(t, k-1)$ appears for all $k$ with $k-1 \leq t$. Thus
\[
\sum_{k=2}^{n-1} r_k(\Phi_n) = \sum_{t=1}^{m} a_{m,t} \sum_{j=1}^{t} S(t, j) = \sum_{t=1}^{m} a_{m,t} B_t,
\]
where we used $B_t = \sum_{j=0}^{t} S(t,j) = \sum_{j=1}^{t} S(t,j)$ since $S(t,0) = 0$ for $t \geq 1$.
\end{proof}

\subsection{Explicit Closed Forms for Small Values of $k$}

We now derive fully explicit closed formulas for $r_k(\Phi_n)$ when $k \in \{2,3,4,5\}$, building on Proposition \ref{prop:fan_refined}. Recall that $m = n-1$ denotes the length of the path in $\Phi_n$, and
\[
a_{m,t} = \binom{m-t}{t-1} + 2\binom{m-t-1}{t-1} + \binom{m-t-2}{t-1},
\]
with the convention $\binom{u}{v} = 0$ when $v < 0$ or $u < v$. Proposition \ref{prop:fan_refined} asserts that
\[
r_k(\Phi_n) = \sum_{t=k-1}^{m} a_{m,t} S(t, k-1),
\]
where $S(\cdot,\cdot)$ denotes the Stirling numbers of the second kind.

\begin{corollary}[Case $k=2$]
\label{cor:phi_k2}
For $m \geq 2$ (equivalently $n \geq 3$),
\[
r_2(\Phi_n) = \sum_{t=1}^{m} a_{m,t} = F_{m+2} = F_{n+1},
\]
where $F_j$ is the $j$-th Fibonacci number with $F_0 = 0, F_1 = 1$.
\end{corollary}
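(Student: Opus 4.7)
The plan is to combine two ingredients: a boundary-case simplification of Proposition~\ref{prop:fan_refined} at $k=2$, and the classical Fibonacci count of independent sets on a path.

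First, I would specialize Proposition~\ref{prop:fan_refined} at $k=2$. Since $S(t,1)=1$ for every $t\geq 1$, the Stirling weights disappear and the sum collapses to $r_2(\Phi_n)=\sum_{t=1}^{m}a_{m,t}$, establishing the first equality.

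Next, I would interpret this sum combinatorially. By Lemma~\ref{lem:binary_encoding}, $a_{m,t}$ counts independent sets of $P_m$ whose complement decomposes into exactly $t$ maximal contiguous blocks. For $m\geq 2$ the full vertex set of $P_m$ is not independent (any two consecutive vertices would be in the set), so every independent set has non-empty complement and thus $t\geq 1$. Consequently $\sum_{t=1}^{m}a_{m,t}$ partitions the family of independent sets of $P_m$ by their block count and equals the total number of independent sets $i(P_m)$, with no overcounting or omission.

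Finally, I would derive $i(P_m)=F_{m+2}$ by the standard one-line argument: conditioning on whether $v_m\in S$ yields the recurrence $i(P_m)=i(P_{m-1})+i(P_{m-2})$ with base cases $i(P_0)=1$ and $i(P_1)=2$, which unwinds to $F_{m+2}$. Substituting $m=n-1$ gives $F_{n+1}$. No step presents a genuine obstacle; the only point requiring care is the boundary verification that $t\geq 1$ is sufficient when $m\geq 2$. As a cross-check, the result agrees with Theorem~\ref{thm:fan}, where $r_2(\Phi_n)=F_{n+1}$ was obtained by a direct apex-based argument, confirming that the refined formula of Proposition~\ref{prop:fan_refined} reduces correctly. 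An alternative purely algebraic proof is also available: using the identity $F_{n+1}=\sum_{j\geq 0}\binom{n-j}{j}$, one can reindex each binomial sum in $a_{m,t}$ to obtain $F_m+2F_{m-1}+F_{m-2}=F_{m+1}+F_m=F_{m+2}$, but the combinatorial route is preferable for its transparency.
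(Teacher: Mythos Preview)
Your proof is correct and takes a genuinely different route from the paper's own argument. The paper proceeds purely algebraically: it invokes the binomial--Fibonacci identity $\sum_{s\ge 0}\binom{N-s}{s}=F_{N+1}$, reindexes each of the three binomial sums inside $a_{m,t}$ separately to obtain $\sum_t a_{m,t}=F_m+2F_{m-1}+F_{m-2}$, and then collapses this via the Fibonacci recurrence to $F_{m+2}$. This is precisely the ``alternative purely algebraic proof'' you sketch in your final sentence.

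Your main argument is the combinatorial one: you observe that $\sum_{t\ge 1}a_{m,t}$ simply stratifies the independent sets of $P_m$ by the block structure of their complements, so the sum is $i(P_m)$ outright, and then you quote the classical recurrence $i(P_m)=i(P_{m-1})+i(P_{m-2})$. The boundary check that $t\ge 1$ suffices when $m\ge 2$ (because $V(P_m)$ itself fails to be independent) is the one subtlety, and you handle it correctly. Your route is shorter and more conceptual, leaning on the combinatorial meaning of $a_{m,t}$ already established in Lemma~\ref{lem:binary_encoding}; the paper's route is more self-contained computationally and has the advantage that the same machinery (evaluating $\sum_t a_{m,t}r^t$ via shifted binomial--Fibonacci sums) is reused verbatim in the subsequent corollaries for $k=3,4,5$ through Lemma~\ref{lem:phi_gen_identity}. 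So the paper's choice is motivated by uniformity across the $k$-cases rather than by necessity here.
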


\begin{proof}
Recall the well-known binomial-Fibonacci identity
\begin{equation}
\label{eq:binom-fib-identity}
\sum_{s \geq 0} \binom{N-s}{s} = F_{N+1}, \qquad (N \geq 0),
\end{equation}
where the sum is finite (only terms with $0 \leq s \leq \lfloor N/2 \rfloor$ contribute).

Replace the index $t$ by $s = t-1$. Then
\[
\sum_{t=1}^{m} \binom{m-t}{t-1} = \sum_{s=0}^{m-1} \binom{m-1-s}{s} = F_m.
\]
Similarly
\[
\sum_{t=1}^{m} \binom{m-t-1}{t-1} = \sum_{s=0}^{m-1} \binom{m-2-s}{s} = F_{m-1},
\]
and
\[
\sum_{t=1}^{m} \binom{m-t-2}{t-1} = \sum_{s=0}^{m-1} \binom{m-3-s}{s} = F_{m-2},
\]
where each identity is an instance of \eqref{eq:binom-fib-identity} with the appropriate shift. Combining these using the definition of $a_{m,t}$ gives
\[
r_2(\Phi_n) = F_m + 2F_{m-1} + F_{m-2}.
\]
Now use the Fibonacci recurrence $F_m = F_{m-1} + F_{m-2}$ to simplify:
\[
r_2(\Phi_n) = (F_{m-1} + F_{m-2}) + 2F_{m-1} + F_{m-2} = 3F_{m-1} + 2F_{m-2} = F_{m+2},
\]
which completes the proof.
\end{proof}

\begin{corollary}[Case $k=3$]
\label{cor:phi_k3}
For $m \geq 2$ (equivalently $n \geq 3$),
\[
r_3(\Phi_n) = \sum_{t=2}^{m} a_{m,t} S(t,2) = 3 \cdot 2^{m-2} - F_{m+2} = 3 \cdot 2^{n-3} - F_{n+1}.
\]
\end{corollary}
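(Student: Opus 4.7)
\emph{Proof plan.} The plan is to start from Proposition \ref{prop:fan_refined} specialized to $k = 3$, namely $r_3(\Phi_n) = \sum_{t=2}^{m} a_{m,t}\, S(t,2)$, and reduce it to a closed form by invoking the elementary identity $S(t,2) = 2^{t-1} - 1$ together with a weighted analogue of the binomial--Fibonacci identity \eqref{eq:binom-fib-identity} used in Corollary \ref{cor:phi_k2}. First I would substitute $S(t,2) = 2^{t-1} - 1$ and split the sum as
\[
r_3(\Phi_n) \;=\; \sum_{t=2}^{m} a_{m,t}\, 2^{t-1} \;-\; \sum_{t=2}^{m} a_{m,t}.
\]
Extending both sums to start from $t = 1$ adds the same boundary term $a_{m,1}$ to each, so the two extensions cancel exactly. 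For the unweighted sum I would apply Corollary \ref{cor:phi_k2} directly to obtain $F_{m+2}$. The problem then reduces to proving the single identity $\sum_{t=1}^{m} a_{m,t}\, 2^{t-1} = 3 \cdot 2^{m-2}$.

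For this key step, I would substitute the three-term binomial expression for $a_{m,t}$, reindex $s = t-1$, and write the result as a combination of three evaluations of the Fibonacci-type polynomial $g_N(x) := \sum_{s \geq 0} \binom{N-s}{s}\, x^s$ at $x = 2$:
\[
\sum_{t=1}^{m} a_{m,t}\, 2^{t-1} \;=\; g_{m-1}(2) + 2\, g_{m-2}(2) + g_{m-3}(2).
\]
These polynomials obey the recurrence $g_N(x) = g_{N-1}(x) + x\, g_{N-2}(x)$ with $g_0(x) = g_1(x) = 1$; the $x = 1$ specialization is precisely \eqref{eq:binom-fib-identity}. At $x = 2$, the characteristic roots of $\lambda^2 - \lambda - 2 = 0$ are $2$ and $-1$, yielding the Binet-style closed form $g_N(2) = (2^{N+1} + (-1)^N)/3$. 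Substituting and simplifying, the alternating-sign contributions $(-1)^{m-1} + 2(-1)^{m-2} + (-1)^{m-3}$ cancel identically, leaving $(2 \cdot 2^m + 2^{m-2})/3 = 3 \cdot 2^{m-2}$, and combining the two pieces gives $r_3(\Phi_n) = 3 \cdot 2^{m-2} - F_{m+2} = 3 \cdot 2^{n-3} - F_{n+1}$.

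The main obstacle I anticipate is precisely this weighted evaluation at $x = 2$, which sits one level above the $x = 1$ identity already used for $k = 2$ and requires either recognizing $g_N(2)$ through its linear recurrence and Binet form or supplying a short combinatorial bijection. Before deploying the main computation, I would run two sanity checks: at $m = 2$ (so $\Phi_3 = K_3$) the proposed closed form gives $3 - 3 = 0$, consistent with the impossibility of any bichromatic 3-partition of a single triangle, and at $m = 3$ it gives $6 - 5 = 1$, matching the unique admissible 3-partition $\{a, v_2\}, \{v_1\}, \{v_3\}$ of $\Phi_4$.
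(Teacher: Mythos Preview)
Your proof is correct and follows essentially the same route as the paper: substitute $S(t,2)=2^{t-1}-1$, split the sum, and reduce to the identity $\sum_{t\ge 1} a_{m,t}\,2^{t-1}=3\cdot 2^{m-2}$ via the Fibonacci-polynomial evaluation at $x=2$. Two minor presentational differences: the paper handles the $t=1$ boundary by computing $a_{m,1}=4$ explicitly rather than using your cancellation observation, and it defers the key $x=2$ evaluation to the general Lemma~\ref{lem:phi_gen_identity} rather than deriving the Binet form $g_N(2)=(2^{N+1}+(-1)^N)/3$ directly as you do.
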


\begin{proof}
The Stirling numbers $S(t,2)$ satisfy the elementary closed form $S(t,2) = 2^{t-1} - 1$ for $t \geq 1$. Hence
\[
r_3(\Phi_n) = \sum_{t=2}^{m} a_{m,t}(2^{t-1} - 1) = \frac{1}{2}\sum_{t=2}^{m} a_{m,t} 2^t - \sum_{t=2}^{m} a_{m,t}.
\]
We claim that $\sum_{t=1}^{m} a_{m,t} 2^t = 3 \cdot 2^{m-1}$ for $m \geq 2$. This can be verified directly or using the generating-function method (see Lemma \ref{lem:phi_gen_identity} below). Denote this sum by $B_m$. Then
\[
\sum_{t=2}^{m} a_{m,t} 2^t = B_m - a_{m,1} \cdot 2.
\]
But $a_{m,1} = \binom{m-1}{0} + 2\binom{m-2}{0} + \binom{m-3}{0} = 1 + 2 + 1 = 4$ for $m \geq 3$ (the $m=2$ case can be checked directly). Substituting gives
\[
r_3(\Phi_n) = \frac{1}{2}(B_m - 8) - (r_2(\Phi_n) - 4) = \frac{1}{2}B_m - r_2(\Phi_n).
\]
Using $B_m = 3 \cdot 2^{m-1}$ and Corollary \ref{cor:phi_k2} (i.e., $r_2 = F_{m+2}$), we obtain
\[
r_3(\Phi_n) = \frac{1}{2} \cdot 3 \cdot 2^{m-1} - F_{m+2} = 3 \cdot 2^{m-2} - F_{m+2},
\]
as claimed.
\end{proof}

For the cases $k = 4$ and $k = 5$, we require a more general framework. The Stirling numbers introduce terms of the form $j^t$ for small integers $j$, and we reduce the evaluation to finite sums $\sum_t a_{m,t} j^t$. The following lemma provides the key identity.

\begin{lemma}[Binomial-Fibonacci-polynomial identity]
\label{lem:phi_gen_identity}
Fix integers $N \geq 0$ and $x$. Define the polynomial sequence $F_{\ell}(x)$ (the Fibonacci polynomials) by
\[
F_0(x) = 0, \qquad F_1(x) = 1, \qquad F_{\ell+1}(x) = F_{\ell}(x) + x F_{\ell-1}(x).
\]
Then, for every integer $N \geq 0$,
\begin{equation}
\label{eq:gen-binom-fib-sum}
\sum_{u \geq 0} \binom{N-u}{u} x^u = F_{N+1}(x),
\end{equation}
where the sum is finite because $\binom{N-u}{u} = 0$ for $u > \lfloor N/2 \rfloor$.

Moreover, for any integer $r \geq 1$, define
\begin{equation}
\label{eq:Sigma-r-def}
\Sigma_r := \sum_{t=1}^{m} a_{m,t} r^t.
\end{equation}
Then
\begin{equation}
\label{eq:Sigma-r-expansion}
\Sigma_r = r\bigl(F_m(r) + 2F_{m-1}(r) + F_{m-2}(r)\bigr),
\end{equation}
where $F_m(r)$ denotes the $m$-th Fibonacci polynomial evaluated at $x = r$.
\end{lemma}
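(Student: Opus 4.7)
The plan is to prove the lemma in two phases: first establish the polynomial identity (\ref{eq:gen-binom-fib-sum}) by strong induction on $N$, then derive (\ref{eq:Sigma-r-expansion}) by substituting the explicit formula for $a_{m,t}$ from Lemma~\ref{lem:binary_encoding}, reindexing, and invoking (\ref{eq:gen-binom-fib-sum}) three times with shifted parameters.

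For the identity (\ref{eq:gen-binom-fib-sum}), I would check the base cases $N \in \{0,1\}$ directly: only $u=0$ contributes in each case, reducing the sums to $1 = F_1(x)$ and $1 = F_2(x)$ respectively. For the inductive step at $N \geq 2$, the key manipulation is Pascal's rule $\binom{N-u}{u} = \binom{N-1-u}{u} + \binom{N-1-u}{u-1}$, which splits the sum into two pieces. The first piece matches the inductive hypothesis at $N-1$ and contributes $F_N(x)$. The second piece, after reindexing $v = u-1$ and factoring out $x$, matches the inductive hypothesis at $N-2$ and contributes $x F_{N-1}(x)$. The defining recurrence $F_{N+1}(x) = F_N(x) + x F_{N-1}(x)$ then closes the argument.

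For the expansion (\ref{eq:Sigma-r-expansion}) of $\Sigma_r$, I would substitute the three-term formula for $a_{m,t}$ into the definition (\ref{eq:Sigma-r-def}) and distribute, obtaining three separate sums over $t$. A single reindexing $u = t - 1$ followed by factoring out a common $r$ transforms each sum into the shape $r \sum_{u \geq 0} \binom{N - u}{u} r^u$ with $N$ equal to $m-1$, $m-2$, and $m-3$ respectively. Applying the identity (\ref{eq:gen-binom-fib-sum}) to each of the three converted sums identifies them with $r F_m(r)$, $2 r F_{m-1}(r)$, and $r F_{m-2}(r)$, and adding the three contributions yields the stated closed form.

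The main obstacle is bookkeeping rather than conceptual: one must verify that the convention $\binom{u}{v} = 0$ for $v < 0$ or $u < v$ handles all boundary cases cleanly, particularly when $m$ is small so that some binomials acquire negative upper indices or when low-index Fibonacci polynomials such as $F_0(x) = 0$ appear. Spot-checking a few small values of $m$ directly against $r(F_m(r) + 2 F_{m-1}(r) + F_{m-2}(r))$ confirms that the reindexing produces no off-by-one errors and that the derivation extends uniformly over the full range $m \geq 1$.
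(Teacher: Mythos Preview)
Your proposal is correct and follows essentially the same approach as the paper: induction on $N$ for identity~(\ref{eq:gen-binom-fib-sum}) and the reindexing $u=t-1$ followed by three applications of that identity for (\ref{eq:Sigma-r-expansion}). The paper's proof is terser---it merely asserts that (\ref{eq:gen-binom-fib-sum}) ``follows by induction on $N$'' without spelling out the Pascal-rule split---but your more detailed treatment of the inductive step and boundary bookkeeping is exactly the argument the paper intends.
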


\begin{proof}
Identity \eqref{eq:gen-binom-fib-sum} follows by induction on $N$ (or by expanding the known generating function for Fibonacci polynomials).

For \eqref{eq:Sigma-r-expansion}, re-index with $u = t-1$:
\[
\Sigma_r = r\sum_{u \geq 0} r^u \Big(\binom{m-1-u}{u} + 2\binom{m-2-u}{u} + \binom{m-3-u}{u}\Big),
\]
where the sums are finite. Apply identity \eqref{eq:gen-binom-fib-sum} with $N = m-1, m-2, m-3$ respectively to obtain the stated formula.
\end{proof}

\begin{definition}
For convenience, define
\[
A_r := F_m(r) + 2F_{m-1}(r) + F_{m-2}(r) \qquad (r \geq 1).
\]
Then $\Sigma_r = r \cdot A_r$.
\end{definition}

\begin{remark}[OEIS Sequence \href{https://oeis.org/A390491}{A390491}]
The array $T(m,r) = F_m(r) + 2F_{m-1}(r) + F_{m-2}(r)$ for $m \geq 2$ and $r \geq 1$ appears in the Online Encyclopedia of Integer Sequences as \href{https://oeis.org/A390491}{A390491}, where rows are indexed by $m$ and columns by $r$. Special cases include column $r=1$ yielding $T(m,1) = F_{m+2}$ (shifted Fibonacci numbers \href{https://oeis.org/A000045}{A000045}) and column $r=2$ yielding $T(m,2) = 3 \cdot 2^{m-2}$ (sequence \href{https://oeis.org/A007283}{A007283}). These values serve as building blocks for chromatic feature vector formulas of fan graphs under the bichromatic triangle constraint.
\end{remark}

\begin{corollary}[Case $k=4$]
\label{cor:phi_k4}
For $m \geq 3$,
\begin{equation*}
r_4(\Phi_n) = \frac{1}{2}\bigl[A_3 - 2A_2 + A_1\bigr],
\end{equation*}
where $A_r = F_m(r) + 2F_{m-1}(r) + F_{m-2}(r)$.
\end{corollary}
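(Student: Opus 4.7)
The plan is to apply Proposition \ref{prop:fan_refined} with $k=4$ and reduce the resulting sum of Stirling numbers to the building blocks $\Sigma_r = rA_r$ provided by Lemma \ref{lem:phi_gen_identity}. The bridge between the two is the standard inclusion-exclusion closed form for $S(t,3)$, which is a polynomial in the small integers $3,2,1$ raised to the power $t$. That polynomial expansion will turn a Stirling-weighted sum into a linear combination of the three quantities $\Sigma_1,\Sigma_2,\Sigma_3$ that the lemma evaluates explicitly.

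First I would write down, from Proposition \ref{prop:fan_refined},
\[
r_4(\Phi_n) = \sum_{t=3}^{m} a_{m,t} S(t,3),
\]
and observe that $S(1,3)=S(2,3)=0$, so the lower index may be extended to $t=1$ without altering the value. Next I would substitute the inclusion-exclusion identity
\[
S(t,3) = \frac{1}{3!}\sum_{j=0}^{3}(-1)^{j}\binom{3}{j}(3-j)^{t} = \frac{1}{6}\bigl(3^{t} - 3\cdot 2^{t} + 3\bigr),
\]
and use linearity of the (finite) sum to rewrite
\[
r_4(\Phi_n) = \frac{1}{6}\Bigl(\Sigma_3 - 3\Sigma_2 + 3\Sigma_1\Bigr),
\]
with $\Sigma_r$ defined as in \eqref{eq:Sigma-r-def}.

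The final step is to invoke Lemma \ref{lem:phi_gen_identity}, which gives $\Sigma_r = r\,A_r$ for each $r\geq 1$. Substituting $r=1,2,3$ yields
\[
r_4(\Phi_n) = \frac{1}{6}\bigl(3A_3 - 6A_2 + 3A_1\bigr) = \frac{1}{2}\bigl(A_3 - 2A_2 + A_1\bigr),
\]
which is the claimed closed form.

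The main obstacle is essentially bookkeeping rather than substance: one must verify that extending the summation index from $t=k-1=3$ down to $t=1$ is harmless (which is immediate from the vanishing of $S(t,3)$ for $t<3$) and that the convention $1^{t}$ is treated consistently with $\Sigma_1 = \sum_{t} a_{m,t}$, so that the $r=1$ branch of Lemma \ref{lem:phi_gen_identity} applies and contributes the term $A_1$. Everything else is routine arithmetic once these alignments are in place.
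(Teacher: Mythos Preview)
Your proof is correct and follows essentially the same route as the paper's own argument: apply Proposition~\ref{prop:fan_refined}, extend the summation range using $S(1,3)=S(2,3)=0$, substitute a closed form for $S(t,3)$, and reduce via $\Sigma_r=rA_r$ from Lemma~\ref{lem:phi_gen_identity}. The only cosmetic difference is that the paper writes $S(t,3)=\tfrac{1}{2}(3^{t-1}-2^t+1)$ while you use the equivalent form $\tfrac{1}{6}(3^t-3\cdot 2^t+3)$; the intermediate coefficients differ but collapse to the same final expression.
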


\begin{proof}
The standard closed form for Stirling numbers is
\[
S(t,3) = \frac{1}{2}(3^{t-1} - 2^t + 1).
\]
Note that $S(1,3) = S(2,3) = 0$, so we may extend the summation index from $t = 3$ to $t = 1$. Substituting into Proposition \ref{prop:fan_refined} yields
\[
r_4(\Phi_n) = \frac{1}{2}\sum_{t=1}^{m} a_{m,t}(3^{t-1} - 2^t + 1) = \frac{1}{2}\Big(\frac{1}{3}\Sigma_3 - \Sigma_2 + \Sigma_1\Big).
\]
Using $\Sigma_r = r \cdot A_r$, we have
\[
r_4(\Phi_n) = \frac{1}{2}\Big(\frac{1}{3} \cdot 3 \cdot A_3 - 2 \cdot A_2 + 1 \cdot A_1\Big) = \frac{1}{2}(A_3 - 2A_2 + A_1).
\]
\end{proof}

\begin{remark}[OEIS Sequence \href{https://oeis.org/A390130}{A390130}]
The sequence $\{r_4(\Phi_n)\}_{n=6}^{\infty}$ begins $1, 5, 19, 61, 180, 500,$ $ 1335, 3459, \ldots$ and is recorded in the Online Encyclopedia of Integer Sequences as \href{https://oeis.org/A390130}{A390130}. The closed form
\[
r_4(\Phi_n) = \frac{1}{2}\bigl[A_3 - 2A_2 + A_1\bigr]
\]
with $A_r = F_{n-1}(r) + 2F_{n-2}(r) + F_{n-3}(r)$ provides polynomial-time computation via Fibonacci polynomial evaluations. For $n \geq 6$, Kotesovec (2025) derived the exact exponential formula
\begin{equation}
\begin{split}
r_4(\Phi_n) = \frac{1}{1755(229\sqrt{5}-261) \cdot 2^{n+3}} \Big[&1404(703\sqrt{5}-1225)(1-\sqrt{5})^n \\
&+ 2808(221\sqrt{5}-40)(\sqrt{5}+1)^n \\
&+ 5(229\sqrt{5}-261)\big((182-46\sqrt{13})(1-\sqrt{13})^n \\
&\quad - 1053 \cdot 4^n + 2(23\sqrt{13}+91)(\sqrt{13}+1)^n\big)\Big],
\end{split} \nonumber
\end{equation}
expressing $r_4(\Phi_n)$ as a linear combination of exponential terms with bases $1 \pm \sqrt{5}$, $1 \pm \sqrt{13}$, $2$, and $4$, arising from characteristic roots of the underlying Fibonacci polynomial recurrences.
\end{remark}

\begin{corollary}[Case $k=5$]
\label{cor:phi_k5}
For $m \geq 4$ (equivalently $n \geq 5$),
\[
r_5(\Phi_n) = \frac{1}{6}(A_4 - 3A_3 + 3A_2 - A_1),
\]
or equivalently,
\[
\begin{aligned}
r_5(\Phi_n) = \frac{1}{6}\Big[&\bigl(F_m(4) + 2F_{m-1}(4) + F_{m-2}(4)\bigr) \\
&- 3\bigl(F_m(3) + 2F_{m-1}(3) + F_{m-2}(3)\bigr) \\
&+ 3\bigl(F_m(2) + 2F_{m-1}(2) + F_{m-2}(2)\bigr) \\
&- \bigl(F_m + 2F_{m-1} + F_{m-2}\bigr)\Big].
\end{aligned}
\]
\end{corollary}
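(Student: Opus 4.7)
The plan is to mirror the strategy used for Corollary \ref{cor:phi_k4}, replacing the closed form for $S(t,3)$ with the corresponding expression for $S(t,4)$ and invoking Lemma \ref{lem:phi_gen_identity} to collapse the resulting sums into $A_r$ values. The starting point is the standard inclusion-exclusion formula
\[
S(t,4) = \frac{1}{4!}\sum_{j=0}^{4}(-1)^{j}\binom{4}{j}(4-j)^{t} = \frac{1}{24}\bigl(4^{t} - 4\cdot 3^{t} + 6\cdot 2^{t} - 4\bigr),
\]
valid for all $t \geq 1$, since the $j=4$ term vanishes. Because $S(t,4) = 0$ whenever $t < 4$, the summation in Proposition \ref{prop:fan_refined} can be safely extended from $t = 4$ down to $t = 1$, matching the form in which the sums $\Sigma_r$ of \eqref{eq:Sigma-r-def} are defined.

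First I would substitute the closed form above into Proposition \ref{prop:fan_refined}, obtaining
\[
r_5(\Phi_n) = \sum_{t=1}^{m} a_{m,t}\, S(t,4) = \frac{1}{24}\bigl(\Sigma_4 - 4\Sigma_3 + 6\Sigma_2 - 4\Sigma_1\bigr).
\]
Next I would apply Lemma \ref{lem:phi_gen_identity}, which gives $\Sigma_r = r\cdot A_r$ for each $r \in \{1,2,3,4\}$. Substituting yields
\[
r_5(\Phi_n) = \frac{1}{24}\bigl(4 A_4 - 12 A_3 + 12 A_2 - 4 A_1\bigr) = \frac{1}{6}\bigl(A_4 - 3A_3 + 3A_2 - A_1\bigr),
\]
which is the compact form in the statement. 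The equivalent expanded form follows by unpacking the definition $A_r = F_m(r) + 2F_{m-1}(r) + F_{m-2}(r)$ for each $r$, with the understanding that $A_1$ uses the ordinary Fibonacci numbers because $F_{\ell}(1) = F_{\ell}$.

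I do not anticipate a substantive obstacle here, since the entire argument is a direct application of the framework established for $k = 4$. The only points that warrant care are: (i) verifying that $S(t,4) = 0$ for $t \in \{1,2,3\}$ so that the summation extension is legitimate; (ii) keeping track of the arithmetic when combining $1/24$ with the factors $r$ coming from $\Sigma_r = r A_r$, which produces the cancellation to $1/6$ and the binomial pattern $(1,-3,3,-1)$ characteristic of a fourth finite difference; and (iii) ensuring the lower bound $m \geq 4$ (equivalently $n \geq 5$) is enough for all invoked $F_{m-2}(r)$ values to be well-defined under the initial conditions in Lemma \ref{lem:phi_gen_identity}. Once these bookkeeping points are addressed, the derivation reduces to a direct substitution.
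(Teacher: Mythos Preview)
Your proposal is correct and follows essentially the same approach as the paper: substitute the inclusion--exclusion closed form for $S(t,4)$ into Proposition~\ref{prop:fan_refined}, extend the sum down to $t=1$ using $S(t,4)=0$ for $t<4$, rewrite in terms of $\Sigma_r$, apply $\Sigma_r = rA_r$ from Lemma~\ref{lem:phi_gen_identity}, and simplify $\tfrac{1}{24}(4A_4-12A_3+12A_2-4A_1)$ to $\tfrac{1}{6}(A_4-3A_3+3A_2-A_1)$. Your additional remarks on bookkeeping (summation extension, arithmetic cancellation, well-definedness of $F_{m-2}(r)$) are reasonable but not needed beyond what the paper already assumes.
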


\begin{proof}
The standard closed form for Stirling numbers is
\[
S(t,4) = \frac{1}{24}(4^t - 4 \cdot 3^t + 6 \cdot 2^t - 4).
\]
Note that $S(1,4) = S(2,4) = S(3,4) = 0$, so we may extend the summation index from $t = 4$ to $t = 1$. Substituting into Proposition \ref{prop:fan_refined} yields
\[
r_5(\Phi_n) = \frac{1}{24}\sum_{t=1}^{m} a_{m,t}(4^t - 4 \cdot 3^t + 6 \cdot 2^t - 4) = \frac{1}{24}(\Sigma_4 - 4\Sigma_3 + 6\Sigma_2 - 4\Sigma_1).
\]
Using $\Sigma_r = r \cdot A_r$, we have
\[
r_5(\Phi_n) = \frac{1}{24}(4A_4 - 4 \cdot 3 \cdot A_3 + 6 \cdot 2 \cdot A_2 - 4 \cdot 1 \cdot A_1) = \frac{1}{6}(A_4 - 3A_3 + 3A_2 - A_1).
\]
\end{proof}

\begin{remark}[OEIS Sequence \href{https://oeis.org/A390131}{A390131}]
The sequence $\{r_5(\Phi_n)\}_{n=8}^{\infty}$ begins $1, 6, 29, 114, 410, 1366,$ $ 4341, 13264, \ldots$ and appears in the Online Encyclopedia of Integer Sequences as \href{https://oeis.org/A390131}{A390131}. The explicit formula
\[
r_5(\Phi_n) = \frac{1}{6}(A_4 - 3A_3 + 3A_2 - A_1)
\]
enables direct computation for any $n \geq 8$ without recursive enumeration. The minimum $n=8$ arises because partitioning into 5 non-empty parts under the bichromatic constraint requires at least 8 vertices in the fan graph topology. The formula structure exhibits alternating signs characteristic of inclusion-exclusion principles in Stirling number expansions, specifically $S(t,4) = \frac{1}{24}(4^t - 4 \cdot 3^t + 6 \cdot 2^t - 4)$ weighted by independent set block counts $a_{m,t}$ from Lemma \ref{lem:binary_encoding}.
\end{remark}

\begin{remark}[Binet-type exponential forms]
For each fixed $r \geq 1$, define $\alpha_r = \frac{1 + \sqrt{1+4r}}{2}$ and $\beta_r = \frac{1 - \sqrt{1+4r}}{2}$, the two roots of $t^2 - t - r = 0$. Then
\[
F_\ell(r) = \frac{\alpha_r^{\ell} - \beta_r^{\ell}}{\alpha_r - \beta_r} \qquad (\ell \geq 0).
\]
Using this representation, we can express
\[
A_r = \frac{1}{\alpha_r - \beta_r}\Big(\alpha_r^{m-2}(\alpha_r+1)^2 - \beta_r^{m-2}(\beta_r+1)^2\Big).
\]
Substituting this into the formulas for $r_4$ and $r_5$ yields purely exponential closed forms with coefficients following binomial patterns.
\end{remark}

\section{Conclusion}

We have established exact closed-form enumeration formulas for chromatic feature vectors of theta and fan graphs under the bichromatic triangle constraint. For theta graphs $\Theta_n$, the remarkably simple formula $r_k(\Theta_n) = S(n-2, k-1)$ for $k \geq 3$ (with $r_2(\Theta_n) = 2^{n-2} + 1$) admits $O(n)$ computation with precomputed Stirling numbers. For fan graphs $\Phi_n$, we prove that $r_2(\Phi_n) = F_{n+1}$ (Fibonacci numbers) and derive the explicit binomial expansion
\[
r_k(\Phi_n) = \sum_{t=k-1}^{n-1} \left[\binom{n-1-t}{t-1} + 2\binom{n-2-t}{t-1} + \binom{n-3-t}{t-1}\right] S(t, k-1),
\]
achieving $O(n^2)$ computation per component. These formulas substantially extend Allagan and Voloshin \cite{AllaganVoloshin2016} by eliminating recursive matrix constructions and revealing deep connections to Fibonacci polynomials, Stirling numbers, and Bell numbers. The total partition count $\sum_{k=2}^{n-1} r_k(\Phi_n) = \sum_{t=1}^{m} a_{m,t} B_t$ connects path independent set enumeration to Bell number distributions, providing geometric insight into how constraint structure determines chromatic spectra.

As structural descriptors, chromatic feature vectors offer polynomial-time computation, interpretable components with precise combinatorial meaning, and zero-shot applicability without training data. Though not complete graph invariants---as demonstrated by our explicit counterexamples of non-isomorphic 2-trees with identical vectors---they capture global coloring properties beyond local neighborhoods and complement graph neural network embeddings. Our formulas enable efficient enumeration of valid configurations in Byzantine fault tolerance \cite{Jaffe2012}, cloud resource allocation \cite{Mann2015}, and threshold cryptography \cite{Beimel2011}, reducing computational cost from exponential exhaustive search to polynomial-time formula evaluation. 

For graph families beyond theta and fan graphs, chromatic feature vectors can be computed via constraint satisfaction when closed formulas are unavailable. The bichromatic triangle constraint generalizes naturally to other forbidden substructure patterns (bichromatic $k$-cliques, bichromatic cycles), enabling design of domain-specific structural descriptors for applications in network reliability, distributed systems, and constraint satisfaction problems. The correspondence between bichromatic triangle colorings and 3-uniform non-linear bihypergraph colorings \cite{Voloshin2002} establishes these feature vectors as exact chromatic spectra for some non-linear 3-uniform bihypergraphs, extending the results of Jiang et al.\ \cite{JiangEtAl2002} and Bujt\'{a}s and Tuza \cite{BujtasTuza2008} on feasible set characterization.

Future work should extend these techniques to asymmetric 2-tree families including tricentral graphs and star-cluster topologies, developing a comprehensive classification of partition vector equivalence classes for all $n$-vertex 2-trees. Understanding precisely which structural properties determine when non-isomorphic graphs share identical feature vectors remains an open problem with implications for both graph isomorphism testing and machine learning feature design. For graph families beyond 2-trees, investigating whether similar closed formulas exist for partial $k$-trees (graphs of bounded treewidth) and deriving approximation algorithms for general graphs represent important directions. Finally, empirical evaluation on real-world graph classification benchmarks comparing chromatic features against state-of-the-art graph kernels and GNN architectures would validate their practical utility and identify domains where structural constraints provide maximum discriminative power.

\section*{Data Availability Statement}

All data generated or analyzed during this study are included in this published article. The mathematical formulas and algorithms presented are deterministic and can be reproduced directly from the closed-form expressions provided in Theorems \ref{thm:theta} and \ref{thm:fan}, Propositions \ref{prop:classical} and \ref{prop:fan_refined}, and Corollaries \ref{cor:phi_k2}--\ref{cor:phi_k5}. No external datasets were used in this study.

\section*{Conflict of Interest}

The authors declare that they have no conflict of interest.

\section*{Acknowledgments}

The authors thank the anonymous reviewers for their helpful suggestions and careful reading of the manuscript.

\end{document}